\newcommand{\cal}[1]{\mathcal #1}
\newcommand{\deff}{\mbox{$\stackrel{\rm def}{=}$}}
\newcommand{\sbinom}[2]{\left[ \begin{array}{c} #1 \\ #2 \end{array} \right] }
\newcommand{\sbinomq}[2]{\sbinom{#1}{#2}_q }
\newcommand{\field}[1]{\mathbb{#1}}
\newcommand{\C}{\field{C}}
\newcommand{\F}{\field{F}}
\newcommand{\cF}{{\cal F}}
\newcommand{\cC}{{\cal C}}
\newcommand{\cG}{{\cal G}}
\newcommand{\cP}{{\cal P}}
\newcommand{\sP}{\cP}
\newcommand{\sG}{\cG}
\newcommand{\Gr}{\smash{{\sG\kern-1.5pt}_q\kern-0.5pt(n,k)}}
\newcommand{\Gk}{\smash{{\sG\kern-1.5pt}_q\kern-0.5pt(n,k_1)}}
\newcommand{\Gkk}{\smash{{\sG\kern-1.5pt}_q\kern-0.5pt(n,k_2)}}
\newcommand{\Grtwo}{\smash{{\sG\kern-1.5pt}_2\kern-0.5pt(n,k)}}
\newcommand{\Gkone}{\smash{{\sG\kern-1.5pt}_q\kern-0.5pt(n,k_1)}}
\newcommand{\Gktwo}{\smash{{\sG\kern-1.5pt}_q\kern-0.5pt(n,k_2)}}
\newcommand{\Ps}{\smash{{\sP\kern-2.0pt}_q\kern-0.5pt(n)}}
\newtheorem{theorem}{Theorem}[section]
\newtheorem{corollary}{Corollary}
\newtheorem{lemma}[theorem]{Lemma}
\theoremstyle{definition}
\newtheorem{remark}{Remark}
\newtheorem{example}{Example}
\DeclareMathOperator{\rank}{rank}
\title[Large Constant Dimension Codes and Lexicodes]
      {Large Constant Dimension Codes and Lexicodes}
\author[Natalia Silberstein and Tuvi Etzion]{}
 \keywords{Grassmannian, Constant dimension code, Lexicode, Ferrers diagram.}
 \email{natalys@cs.technion.ac.il}
 \email{etzion@cs.technion.ac.il}
\thanks{This work was supported in part by the Israel
Science Foundation (ISF), Jerusalem, Israel, under Grant 230/08.}
\begin{document}
\maketitle

% Enter the first author's name and address:
\centerline{\scshape Natalia Silberstein }
\medskip
{\footnotesize
% please put the address of the first author
 \centerline{Computer Science Department}
   \centerline{Technion - Israel Institute of Technology}
   \centerline{ Haifa, Israel, 32000}
} % Do not forget to end the {\footnotesize by the sign }

\medskip

\centerline{\scshape   Tuvi Etzion}
\medskip
{\footnotesize
 % please put the address of the second  and third author
 \centerline{ Computer Science Department}
   \centerline{Technion - Israel Institute of Technology}
   \centerline{ Haifa, Israel, 32000}
}

\bigskip

% The name of the associate editor will be entered by an editorial staff
 %\centerline{(Communicated by the associate editor name)}

%The abstract of your paper
\begin{abstract}
Constant dimension codes, with a prescribed minimum distance, have
found recently an application in network coding. All the codewords
in such a code are subspaces of $\F_q^n$ with a given dimension. A
computer search for large constant dimension codes is usually
inefficient since the search space domain is extremely large. Even
so, we found that some constant dimension lexicodes are larger
than other known codes. We show how to make the computer search
more efficient. In this context we present a formula for the
computation of the distance between two subspaces, not necessarily
of the same dimension.
\end{abstract}

%The title of your section 1
\section{Introduction}

\label{sec:introduction}
Let $\F_q$ be the finite field of size $q$. The set of all $k$-dimensional
subspaces of  the vector space~\smash{$\F_q^n$}, for any given two
nonnegative integers $k$ and $n$, $0\leq k \le n$, forms the
\emph{Grassmannian space} (Grassmannian, in short) over  $\F_q$,
denoted by $\Gr$. The Grassmannian space is a metric space, where the
\emph{subspace distance} between any two subspaces $X$ and $Y$ in $\Gr$,
is given by
\begin{equation}
\label{def_subspace_distance}
d_S (X,\!Y) \,\ \deff\ \dim X + \dim Y -2 \dim\bigl( X\, {\cap}Y\bigr).
\end{equation}
This is also the definition for the distance between two subspaces of
~\smash{$\F_q^n$}
which are not of the same dimension.

We say that $\C\subseteq \Gr$ is an $(n,M,d,k)_q$ \emph{code in
the Grassmannian}, or \emph{constant-dimension code}, if $M =
|\C|$ and $d_S (X,\!Y) \ge d$ for all distinct elements $X,\!Y \in
\C$. The minimum distance of $\C$, $d_S(\C)$, is $d$.

Koetter and Kschischang~\cite{KK} presented an application of
error-correcting codes in  $\Gr$ to random network coding. This
led to an extensive research for construction of large codes in
the Grassmannian. Constructions and bounds for such codes were
given in~\cite{EtSi09,EV,GaYa08,KK,KoKu08,SiEt09,SKK08,Ska08}.

The motivation for this work is an $(8,4605,4,4)_2$ constant
dimension lexicode constructed in~\cite{SiEt09} which is larger
than any other known codes with the same parameters.

\textit{Lexicographic codes}, or \textit{lexicodes}, are greedily
generated error-correcting codes which were first developed by
Levinshtein~\cite{Lev60}, and rediscovered by Conway and
Sloane~\cite{CoSl86}. The construction  of a lexicode with a
minimum distance $d$ starts with the set $ \mathcal {S}= \{S_0\}$,
where $S_0$ is the first element in a lexicographic order, and
greedily adds the lexicographically first element whose distance
from all the elements of $\mathcal {S}$ is at least $d$. In the
Hamming space, the lexicodes include the optimal codes, such as
the Hamming codes and the Golay codes.

To construct a lexicode, we need first to define some order of all
subspaces in the Grassmannian. The $(8,4605,4,4)_2$ lexicode found
in~\cite{SiEt09} is based on the Ferrers tableaux form
representation of a subspace. First, for completeness, we provide
the definitions which are required to define the Ferrers tableaux
form representation of subspaces in the Grassmannian, and next, we
define the order of the Grassmannian based on this representation.

A \textit{partition} of a positive integer $m$ is a representation
of $m$ as  a sum of positive integers, not necessarily distinct.

A {\it Ferrers diagram} $\cF$ represents a partition as a pattern
of dots with the $i$-th row having the same number of dots as the
$i$-th term in the partition~\cite{AnEr04,vLWi92,Sta86}. (In the
sequel, a {\it dot} will be denoted by a $"\bullet"$). A Ferrers
diagram satisfies the following conditions.
\begin{itemize}
\item The number of dots in a row is at most the number of dots in
the previous row.

\item All the dots are shifted to the right of the diagram.
\end{itemize}

The {\it number of rows (columns)} of the Ferrers diagram $\cF$ is
the number of dots in the rightmost column (top row) of $\cF$. If
the number of rows in the Ferrers diagram is $m$ and the number of
columns is $\eta$, we say that it is an $m \times \eta$ Ferrers
diagram.

Let $X\in \Gr$ be a $k$-dimensional subspace in the Grassmannian.
We can represent $X$ by the $k$ linearly independent vectors from
$X$ which form  a unique $k\times n$ generator matrix in {\it
reduced row echelon form} (RREF), denoted by $RE(X)$, and defined
as follows:
\begin{itemize}
\item The leading coefficient of a row is always to the right of
the leading coefficient of the previous row.

\item All leading coefficients are {\it ones}.

\item Every leading coefficient is the only nonzero entry in its
column.
\end{itemize}

For each $X\in \Gr$  we associate a binary vector of length $n$
and weight $k$, $v(X)$, called the \emph{identifying vector} of
$X$, where the \textit{ones} in $v(X)$ are exactly in the
positions where $RE(X)$ has the leading \textit{ones}.

The {\it echelon Ferrers form} of a binary vector $v$ of length
$n$ and weight $k$, $EF(v)$, is the $k\times n$ matrix in RREF
with leading entries (of rows) in the columns indexed by the
nonzero entries of $v$ and $"\bullet"$  in all entries which do
not have terminal {\it zeroes} or {\it ones} (see~\cite{EtSi09}).
The dots of this matrix form the Ferrers diagram $\cF$ of $EF(v)$.
If we substitute elements of $\F_q$ in the dots of $EF(v)$ we
obtain a generator matrix in RREF of a $k$-dimensional subspace of
$\Gr$. $EF(v)$ and $\cF$ will be called also the echelon Ferrers
form  and the Ferrers diagram of such a subspace, respectively.

The {\it Ferrers tableaux form} of a subspace $X$, denoted by
$\cF(X)$, is obtained by assigning the values of $RE(X)$ in the
Ferrers diagram $\cF_X$ of $X$. Each Ferrers tableaux form
represents a unique subspace in $\Gr$.

\begin{example}  Let $X$ be the subspace in $\mathcal G_2(7,3)$ with
the  following  generator matrix in RREF:
$$RE(X)=\left( \begin{array}{ccccccc}
1 & 0 & 0 & 0 & 1 & 1 & 0 \\
0 & 0 & 1 & 0 & 1 & 0 & 1 \\
0 & 0 & 0 & 1 & 0 & 1 & 1
\end{array}
\right) ~.$$ Its identifying vector is $v(X)=1011000$, and its
echelon Ferrers form, Ferrers diagram,   and Ferrers tableaux form
are given by
$$\left[ \begin{array}{ccccccc}
1 & \bullet & 0 & 0 & \bullet & \bullet & \bullet \\
0 & 0 & 1 & 0 & \bullet & \bullet & \bullet \\
0 & 0 & 0 & 1 & \bullet & \bullet & \bullet
\end{array}\right],~~~
\begin{array}{cccc}
 \bullet & \bullet & \bullet & \bullet \\
  & \bullet & \bullet & \bullet   \\
  & \bullet & \bullet & \bullet  \\
\end{array},~~~
\; \textrm{and }\;
\begin{array}{cccc}
0 & 1 & 1 & 0 \\
&1 & 0 & 1  \\
&0 & 1 & 1
\end{array},\; \textrm{respectively }.$$
\end{example}

Let $\cF$ be a Ferrers diagram  of a subspace $X\in \Gr$.  $\cF$ can be
embedded in  a
$k\times (n-k)$ box. We represent $\cF$ by an integer vector of
length $n-k$, $(\cF_{n-k},...,\cF_2,\cF_1)$, where $\cF_i$ is
equal to the number of dots in the $i$-th column of $\cF$, $1\leq
i\leq n-k$, where we number the columns from right to left. Note
that $\cF_{i+1} \leq \cF_i$, $1 \leq i \leq n-k-1$.

To define an order of all the subspaces in the Grassmannian we
need first to define an order of all the Ferrers diagrams embedded
in the $k\times (n-k)$ box.
Let $| \cF |$ denote the {\it size} of $\cF$, i.e., the number of
dots in $\cF$. For two Ferrers diagrams $\cF$ and
$\widetilde{\cF}$, we say that $\cF < \widetilde{\cF}$ if one of
the following two conditions holds.
\begin{itemize}
\item $|\cF| > |\widetilde{\cF}|$

\item $|\cF| = |\widetilde{\cF}|$,  and  $\cF_i > \widetilde{\cF}_i$
for the least index $i$ where the two diagrams $\cF$  and
$\widetilde{\cF}$ have a different number of dots.
\end{itemize}

Now, we define the following order of subspaces in the
Grassmannian based on the Ferrers tableaux form representation.
Let $X$, $Y\in\Gr$ be two $k$-dimensional subspaces,  and $RE(X)$,
$RE(Y)$ their related RREFs. Let $v(X),~v(Y)$ be the identifying
vectors of $X,~Y$, respectively, and $\cF_X,~\cF_Y$ the Ferrers
diagrams of $EF(v(X)),~EF(v(Y))$, respectively. Let
$x_1,x_2,...,x_{|\cF_X|}$ and $y_1,y_2,...,y_{|\cF_Y|}$ be the
entries of Ferrers tableaux forms $\cF(X)$ and $\cF(Y),$
respectively. The entries of a Ferrers tableaux form are numbered
from right to left, and from top to bottom.

We say that $X < Y$ if one of the following two conditions holds.
\begin{itemize}
\item $\cF_X < \cF_Y; $

\item $\cF_X = \cF_Y $,  and $(x_1,x_2,...,x_{|\cF_X|}) <
(y_1,y_2,...,y_{|\cF_Y|}).$
\end{itemize}
\vspace{0.2cm}
\begin{example}
Let $X,Y,Z,W\in\mathcal G_2(6,3)$ be given by
\begin{footnotesize}
\begin{align*}
\cF(X)=
\begin{array}{ccc}
1 & 1 & 1 \\
1 & 1 & 1  \\
&  & 1
\end{array},~~~
\cF(Y)=
\begin{array}{ccc}
1 & 0 & 1 \\
& 0 & 0  \\
& 1 & 1
\end{array},~~~
%\end{align*}
%\begin{align*}
\cF(Z)=
\begin{array}{ccc}
1 & 1 & 1 \\
& 1 & 1  \\
& & 0
\end{array},~~~
\cF(W)=
\begin{array}{ccc}
1 & 1 & 1 \\
& 1 & 1  \\
& & 1
\end{array}.
\end{align*}
\end{footnotesize}
By the definition, we have that $\cF_Y < \cF_X < \cF_Z =\cF_W $.
Since $(z_1,z_2,...,z_{|\cF_Z|})=(1,1,0,1,1,1)<
(w_1,...,w_{|\cF_W|})=(1,1,1,1,1,1)$, it follows that $Y<X<Z<W$.
\end{example}

The construction of lexicodes involves many computations of the
distance between two subspaces of $\Gr$. In
Section~\ref{sec:distance} we develop a new formula for
computation of the distance between two subspaces not necessarily
of the same dimension. This formula will enable a faster
computation of the distance between any two subspaces of $\Gr$. In
Section~\ref{sec:analysis} we examine several properties of
constant dimension codes which will enable to simplify the
computer search for large lexicodes. In Section~\ref{sec:search}
we describe a general search method for constant dimension
lexicodes. We also present some improvements on the sizes of
constant dimension codes. In Section~\ref{sec:conclusion}  we
summarize our results and present several problems for further
research.

\section{Computation of Distance between Subspaces}
\label{sec:distance}

The research on error-correcting codes in the Grassmannian in
general and on the search for related lexicodes in particular
requires many computations of the distance between two subspaces
in the Grassmannian. We will examine  a more general computation
problem of the distance between any two subspaces $X,Y\subseteq
\F_q^n$  which do not necessarily have the same dimension. The
motivation is to simplify the computations that lead to the next
subspace which will be joined to the lexicode.

Let $A*B$ denotes the concatenation  $\left(\begin{footnotesize}
\begin{array}{c}
A\\B\end{array}\end{footnotesize}\right)$ of two matrices $A$ and $B$
with the same number of columns.
By the definition of the subspace distance (\ref{def_subspace_distance}),
it follows that
\begin{align}
d_S(X,Y) & =  2\rank(RE(X)*RE(Y))
 -  \rank(RE(X)) - \rank(RE(Y)).\label{distance-rank}
\end{align}

Therefore, the calculation of $d_S(X,Y)$ can be done by using
Gauss elimination. In this section we present an improvement on
this calculation by using the  representation of subspaces by
Ferrers tableaux forms, from which their identifying vectors  and
their RREF are easily determined. We will present an alternative
formula for the computation of the distance between two subspaces
$X$ and $Y$.

For $X \in \Gk$ and $Y \in \Gkk$, let  $\rho(X,Y)$ [$\mu(X,Y)$] be
a set of indices (of coordinates) with common \textit{zeroes}
[\textit{ones}] in $v(X)$  and $v(Y)$, i.e.,
\[
 \rho(X,Y)=\left\{  i |\;v(X)_i=0\mbox{ and }v(Y)_i
=0\right\} ,\]
 and \[
\mu(X,Y) =\left\{  i|\;v(X)_i=1\mbox{ and }v(Y)_i
=1\right\} .\]

Note that $|\rho(X,Y)|+|\mu(X,Y)|+d_{H}(v(X),v(Y))=n$, where $d_H(\cdot ,\cdot)$
denotes the Hamming distance, and
\begin{equation}
|\mu(X,Y)|=\frac{k_1+k_2-d_{H}(v(X),v(Y))}{2}.\label{commom_ones}
\end{equation}

Let $X_\mu$ be the $|\mu(X,Y)| \times n$ sub-matrix of $RE(X)$
which consists of the rows with leading \textit{ones} in the
columns related to (indexed by) $\mu(X,Y)$. Let $X_{\mu^C}$  be
the $(k_1-|\mu(X,Y)|)\times n$ sub-matrix of $RE(X)$  which
consists of all the rows  of  $RE(X)$ which are not contained in
$X_\mu$. Similarly, let $Y_{\mu}$ be the $|\mu(X,Y)| \times n$
sub-matrix of $RE(Y)$ which consists of the rows with leading
\textit{ones} in the columns related to  $\mu(X,Y)$. Let
$Y_{\mu^C}$ be the $(k_2-|\mu(X,Y)|)\times n$ sub-matrix of
$RE(Y)$ which consists of all the rows  of $RE(Y)$ which are not
contained in $Y_{\mu}$.

Let $\widetilde{X}_\mu$  be the $|\mu(X,Y)|\times n$ sub-matrix of
$RE(RE(X) * Y_{\mu^C})$ which consists of the rows with
leading\textit{ ones} in the columns indexed by  $\mu(X,Y)$.
Intuitively, $\widetilde{X}_\mu$ obtained by concatenation of the
two matrices, $RE(X)$ and $Y_{\mu^C}$,  and "cleaning" (by adding
the corresponding rows of $Y_{\mu^C}$) all the nonzero entries in
columns of $RE(X)$ indexed by leading \textit{ones} in
$Y_{\mu^C}$. Finally, $\widetilde{X}_\mu$ is obtained by taking
only the rows which are indexed by $\mu(X,Y)$. Thus,
$\widetilde{X}_\mu$ has all-zeroes columns indexed by
\textit{ones} of $v(Y)$  and  $v(X)$ which are not in $\mu(X,Y)$.
Hence $\widetilde{X}_\mu$ has nonzero elements only in columns
indexed by $\rho(X,Y)\cup\mu(X,Y)$.

Let $\widetilde{Y}_\mu$ be the $|\mu(X,Y)|\times n$ sub-matrix of
$RE(RE(Y) * X_{\mu^C})$ which consists of the rows with
leading\textit{ ones} in the columns indexed by  $\mu(X,Y)$.
Similarly to $\widetilde{X}_\mu$, it can be verified that
$\widetilde{Y}_\mu$ has nonzero elements only in columns indexed
by $\rho(X,Y)\cup\mu(X,Y)$.

\begin{corollary}
\label{cor:size of X-Y}
Nonzero entries in  $\widetilde{X}_\mu - \widetilde{Y}_\mu$
can appear only in columns indexed by $\rho(X,Y)$.
\end{corollary}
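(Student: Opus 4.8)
The plan is to prove Corollary~\ref{cor:size of X-Y} by combining the two structural facts already established in the paragraphs immediately preceding it, namely that both $\widetilde{X}_\mu$ and $\widetilde{Y}_\mu$ have nonzero entries only in columns indexed by $\rho(X,Y)\cup\mu(X,Y)$. Since the set of columns of $\widetilde{X}_\mu-\widetilde{Y}_\mu$ that can possibly be nonzero is contained in the union of the supports of $\widetilde{X}_\mu$ and $\widetilde{Y}_\mu$, the difference can have nonzero entries only in columns indexed by $\rho(X,Y)\cup\mu(X,Y)$. The remaining task is therefore to rule out the columns indexed by $\mu(X,Y)$, i.e.\ to show that $\widetilde{X}_\mu$ and $\widetilde{Y}_\mu$ agree precisely on those columns.

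**Key step: agreement on the $\mu$-columns.**

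First I would recall how the two matrices are constructed. Both $\widetilde{X}_\mu$ and $\widetilde{Y}_\mu$ are $|\mu(X,Y)|\times n$ matrices in reduced row echelon form (they are sub-matrices extracted from RREF computations), and both have their leading \emph{ones} exactly in the columns indexed by $\mu(X,Y)$. Since a matrix in RREF has the property that every leading \emph{one} is the unique nonzero entry in its column, each of $\widetilde{X}_\mu$ and $\widetilde{Y}_\mu$ restricted to the columns indexed by $\mu(X,Y)$ is an identity-type block: in the column corresponding to the $j$-th element of $\mu(X,Y)$ there is a single \emph{one}, located in the $j$-th row, and \emph{zeroes} elsewhere. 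This follows because the rows are indexed by $\mu(X,Y)$ in the same (top-to-bottom) order in which the leading \emph{ones} appear across the $\mu$-columns. Consequently the two sub-matrices carry the \emph{same} pattern on these columns, so their difference vanishes identically there.

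**Assembling the conclusion.**

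Combining the two observations finishes the argument: outside $\rho(X,Y)\cup\mu(X,Y)$ both matrices are zero, so the difference is zero there; and on the columns indexed by $\mu(X,Y)$ the two matrices coincide (each being the canonical identity block forced by the RREF leading-\emph{one} structure), so the difference is again zero. Hence every nonzero entry of $\widetilde{X}_\mu-\widetilde{Y}_\mu$ must lie in a column indexed by $\rho(X,Y)$, as claimed.

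**Expected main obstacle.**

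The one point requiring care — and the step I expect to be the crux — is justifying that the two matrices agree \emph{exactly} on the $\mu$-columns rather than merely both vanishing off $\rho\cup\mu$. This hinges on the fact that the row-ordering convention (rows indexed by $\mu(X,Y)$, numbered top to bottom) is applied consistently to both $\widetilde{X}_\mu$ and $\widetilde{Y}_\mu$, so that the leading \emph{one} for a given element of $\mu(X,Y)$ sits in the same row of each matrix; I would need to confirm this alignment explicitly, after which the RREF uniqueness-in-column property does the rest.
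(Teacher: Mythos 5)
Your proof is correct and follows essentially the same argument as the paper: the paper's one-line proof likewise observes that the columns of $\widetilde{X}_\mu$ and $\widetilde{Y}_\mu$ indexed by $\mu(X,Y)$ each form a $|\mu(X,Y)|\times|\mu(X,Y)|$ identity matrix (hence cancel in the difference), combined with the previously established fact that both matrices are supported on columns indexed by $\rho(X,Y)\cup\mu(X,Y)$. Your write-up merely makes explicit the row-alignment detail that the paper treats as immediate.
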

\begin{proof}
An immediate consequence since the columns of $\widetilde{X}_\mu$
and $\widetilde{Y}_\mu$ indexed by $\mu(X,Y)$ form a
$|\mu(X,Y)|\times |\mu(X,Y)|$ identity matrix.
\end{proof}

\begin{theorem}
\label{thm:distance}
\begin{equation}
d_{S}(X,Y)=d_H (v(X),v(Y))+2\rank (\widetilde{X}_\mu - \widetilde{Y}_\mu)
\label{subspace_distance}.
\end{equation}
\end{theorem}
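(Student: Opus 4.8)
The plan is to start from the rank-based formula \eqref{distance-rank}, namely
\[
d_S(X,Y)=2\rank(RE(X)*RE(Y))-k_1-k_2,
\]
and to show that the row space of the stacked matrix $RE(X)*RE(Y)$ can be reorganized, without changing its rank, into three independent blocks: the common rows living on $\mu(X,Y)$, the rows that distinguish $v(X)$ from $v(Y)$, and the ``difference'' rows captured by $\widetilde{X}_\mu-\widetilde{Y}_\mu$. First I would argue that performing row operations (Gauss elimination) on $RE(X)*RE(Y)$ does not change the row space, and that the construction of $\widetilde{X}_\mu$ and $\widetilde{Y}_\mu$ is exactly such a sequence of operations. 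Concretely, the $|\mu(X,Y)|$ rows of $\widetilde{X}_\mu$ and the $|\mu(X,Y)|$ rows of $\widetilde{Y}_\mu$ each span the ``$\mu$-part'' of their respective subspaces after the off-$\mu$ coordinates have been cleaned, while the rows $X_{\mu^C}$ and $Y_{\mu^C}$ contribute the pivot positions counted by $d_H(v(X),v(Y))$.

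Next I would count ranks blockwise. The leading ones of $X_{\mu^C}$ sit in columns where $v(X)=1,\,v(Y)=0$, and those of $Y_{\mu^C}$ sit in columns where $v(X)=0,\,v(Y)=1$; these column sets are disjoint from each other and, after cleaning, disjoint from the $\mu$-columns. Hence these $(k_1-|\mu|)+(k_2-|\mu|)=d_H(v(X),v(Y))$ rows are linearly independent and contribute exactly $d_H(v(X),v(Y))$ to the rank. It then remains to account for the rank contributed by the $\mu$-rows of both subspaces. Taking $\widetilde{X}_\mu$ together with $\widetilde{Y}_\mu$ spans the same space as $\widetilde{X}_\mu$ together with the difference $\widetilde{X}_\mu-\widetilde{Y}_\mu$; since $\widetilde{X}_\mu$ alone has rank $|\mu(X,Y)|$ (its $\mu$-columns form an identity, by the reasoning in Corollary~\ref{cor:size of X-Y}), and since $\widetilde{X}_\mu-\widetilde{Y}_\mu$ is supported entirely on the $\rho(X,Y)$ columns by Corollary~\ref{cor:size of X-Y}, these two blocks meet only in the zero vector. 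Therefore the combined $\mu$-contribution to the rank is $|\mu(X,Y)|+\rank(\widetilde{X}_\mu-\widetilde{Y}_\mu)$.

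Assembling the pieces gives
\[
\rank(RE(X)*RE(Y))=d_H(v(X),v(Y))+|\mu(X,Y)|+\rank(\widetilde{X}_\mu-\widetilde{Y}_\mu),
\]
and substituting $|\mu(X,Y)|=(k_1+k_2-d_H(v(X),v(Y)))/2$ from \eqref{commom_ones} into \eqref{distance-rank} yields the claimed identity \eqref{subspace_distance} after simplification.

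The main obstacle I anticipate is the rigorous justification that the three blocks are jointly independent and genuinely span the whole row space of $RE(X)*RE(Y)$ --- that is, verifying that no rank is lost or gained when we replace the original stacked rows by the cleaned matrices $\widetilde{X}_\mu,\widetilde{Y}_\mu$ and the off-$\mu$ rows. The clean support structure established in Corollary~\ref{cor:size of X-Y} (the $\mu$-columns of $\widetilde{X}_\mu$ and $\widetilde{Y}_\mu$ forming identity blocks, and the difference being supported only on $\rho(X,Y)$) is precisely what makes this partition into independent blocks work, so the crux is to invoke it carefully at each step rather than to perform any heavy computation.
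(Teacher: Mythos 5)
Your proposal is correct and takes essentially the same route as the paper's proof: both start from (\ref{distance-rank}), use rank-preserving row operations to replace the $\mu$-rows by their ``cleaned'' versions, invoke Corollary~\ref{cor:size of X-Y} to split the rank additively along disjoint supports (leading positions outside $\rho(X,Y)$ versus the difference matrix supported on $\rho(X,Y)$), and conclude via (\ref{commom_ones}). The only organizational difference is that the paper cleans asymmetrically --- keeping $RE(RE(X)*Y_{\mu^C})$ as a single block of rank $k_1+k_2-|\mu(X,Y)|$ and subtracting $\widetilde{X}_\mu$ from $\widetilde{Y}_\mu$ --- whereas you clean both sides and count three blocks totaling $d_H(v(X),v(Y))+|\mu(X,Y)|+\rank(\widetilde{X}_\mu-\widetilde{Y}_\mu)$, which is the same quantity.
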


\begin{proof}
By (\ref{distance-rank}) it is sufficient to proof that

\begin{align}
2\rank (RE(X)*RE(Y))& = k_1+k_2+d_H(v(X),v(Y))
+2\rank(\widetilde{X}_\mu - \widetilde{Y}_\mu).\label{eq.thm}
\end{align}

It is easy to verify that

%\begin{scriptsize}
\begin{align}
\rank
\left(\begin{array}{c}
RE(X)\\RE(Y)\end{array}\right)= \rank \left(\begin{array}{c}
RE(X)\\Y_{\mu^C}\\Y_\mu\end{array}\right)
=\rank \left(\begin{array}{c}
RE(X)\\Y_{\mu^C}\\\widetilde{Y}_\mu\end{array}\right)\nonumber\\
=\rank\left(\begin{array}{c}
RE(RE(X)*Y_{\mu^C})\\
\widetilde{Y}_\mu\end{array}\right)
=\rank \left(\begin{array}{c}
RE(RE(X)*Y_{\mu^C})\\
\widetilde{Y}_\mu-\widetilde{X}_\mu\end{array}\right)\label{eq:rank}.
\end{align}
%\end{scriptsize}

We note that the positions of the leading \textit{ones} in all the
rows of $RE(X)*Y_{\mu^C}$ are in $\{1,2,\ldots,n\}\setminus
\rho(X,Y)$. By Corollary \ref{cor:size of X-Y} the positions of
the leading \textit{ones } of all the rows of
$RE(\widetilde{Y}_\mu-\widetilde{X}_\mu)$ are in $\rho(X,Y)$.
Thus, by (\ref{eq:rank}) we have
\begin{align}
\rank (RE(X)*RE(Y)) =&\rank(RE(RE(X)*Y_{\mu^C})+\rank
(\widetilde{Y}_\mu-\widetilde{X}_\mu).\label{eq:1}
\end{align}

Since the sets of positions of the leading \textit{ones} of
$RE(X)$ and $Y_{\mu^C}$ are disjoint, we have that $\rank
(RE(X)*Y_{\mu^C})= k_1+(k_2-|\mu(X,Y)|)$, and thus by (\ref{eq:1})

\begin{align}
\rank (RE(X)*RE(Y))
=&k_1+k_2-|\mu(X,Y)|+\rank(\widetilde{Y}_\mu-\widetilde{X}_\mu).
\label{eq:2}
\end{align}
Combining (\ref{eq:2}) and (\ref{commom_ones}) we have
\begin{align*}
2\rank (RE(X)*RE(Y))
= k_{1}+k_{2}+d_{H}(v(X),v(Y))+2\rank (\widetilde{Y}_\mu-\widetilde{X}_\mu),
\end{align*}
and by (\ref{eq.thm}) this proves the theorem.

\end{proof}

%\begin{cor}
%\[d_S(X,Y)\equiv d_H(v(X),v(Y))\mod 2.
%\]
%\end{cor}

\begin{corollary}
\label{cor:distance}
For any two subspaces $X,Y\subseteq \F_q^n$,
$$d_{S}(X,Y)\geq d_{H}(v(X),v(Y)).$$
\end{corollary}

\begin{corollary}
\label{cor:distanceSameID}
Let $X$  and $Y$ be two subspaces such that $v(X)=v(Y)$. Then
$$d_{S}(X,Y)= 2\rank(RE(X)-RE(Y)).$$
\end{corollary}

In the sequel we will show how we can use these results to make the
search of lexicodes more efficient.
%From Theorem~\ref{thm:distance}  and Corollary~\ref{cor:distance}
%it follows that in the process of construction of lexicodes,
%for some subspaces it is enough only to
%calculate the Hamming distance between the identifying
%vectors in order to check the subspace distance, i.e., when we examine  a
%new subspace to be inserted in  the lexicode, we first check its Hamming
%distance from the identifying vector of a codeword and only if the
%distance is smaller than $d$, we calculate the rank of the suitable matrix,
%which is smaller than $k\times (n-k)$.

%For the other
%subspaces we also have to calculate the rank of some matrices,
%but they are  smaller  than we might have without the
%assumption that all the subspaces are given in the Ferrers tableaux form.

\section{Analysis of Constant Dimension Codes}
\label{sec:analysis}

In this section we consider some properties of constant dimension
codes which will help us to simplify the search for lexicodes.
First, we introduce the multilevel structure of a code in the
Grassmannian.

All the  binary vectors of the length $n$ and weight $k$
can be considered as  the identifying vectors of all the subspaces
in  $\Gr$. These  $\binom{n}{k}$ vectors partition  $\Gr$
into the $\binom{n}{k}$ different classes, where each class
consists of all subspaces in  $\Gr$ with the same identifying
vector. These classes are called \emph{Schubert cells}~\cite[p. 147]{Ful97}.
Note that each Schubert cell contains all the subspaces with the same
given echelon Ferrers form.

According to this  partition all the constant dimension
codes have a multilevel structure: we can partition  all
the codewords  of a code into different classes (sub-codes), each of which
have the same identifying vector. Therefore, the first level of this
structure is the set of different identifying vectors, and the second
level is the subspaces corresponding to these vectors.

Let $\C\subseteq \Gr$ be a constant dimension code, and let
$\{v_1,v_2,\ldots,v_{t}\}$ be all the different identifying
vectors of the codewords in $\C$. Let $\{\C_1,\C_2,\ldots,\C_t\}$
be the partition of $\C$ into $t$ sub-codes induced by these $t$
identifying vectors, i.e., $v(X)=v_i$, for each $X\in \C_i$,
$1\leq i\leq t$.

\begin{remark} We can choose any constant weight code $ \textit{C}$  with  minimum
Hamming distance $d$ to be the set of identifying vectors. If for
each identifying vector $v\in \textit{C}$  we have a sub-code
$\C_v$ for which $v(X)=v$  for each $X\in \C_v$, and
$d_S(\C_v)=d$, then by Corollary \ref{cor:distance} we obtain a
constant dimension code with the same minimum distance $d$. If for
all such identifying vectors we construct the maximum size
constant dimension sub-codes then we obtain the multilevel
construction (ML construction, in short) which was described in
~\cite{EtSi09}. One question that arises in this context is how to
choose the best constant weight code for this ML construction.
\end{remark}

To understand the structure of a sub-code formed by some Ferrers
diagram induced by an identifying vector, we need the following
definitions.

For two $m \times \eta$ matrices $A$ and $B$ over $\F_q$ the {\it
rank distance}, $d_R(A,B)$, is defined by
$$
d_R (A,B) ~ \deff ~ \text{rank}(A-B)~.
$$
A code $\cC$ is an $[m \times \eta,\varrho,\delta]$ {\it
rank-metric code} if its codewords are $m \times \eta$ matrices
over $\F_q$, they form a linear subspace of dimension $\varrho$ of
$\F_q^{m \times \eta}$, and for each two distinct codewords $A$
and $B$ we have that $d_R (A,B) \geq \delta$. For an $[m \times
\eta,\varrho,\delta]$ rank-metric code $\cC$ we have $\varrho \leq
\text{min}\{m(\eta-\delta+1),\eta(m-\delta+1)\}$
(see~\cite{Del78,Gab85,Rot91}). This bound is attained for all
possible parameters and the codes which attain it are called {\it
maximum rank distance} codes (or MRD codes in short).

Let $v$ be a vector of length $n$ and weight $k$ and let $EF(v)$
be its echelon Ferrers form. Let $\cF$ be the Ferrers diagram of
$EF(v)$. $\cF$ is an $m \times \eta$ Ferrers diagram, $m \leq k$,
$\eta \leq n-k$. A code $\cC$ is an $[\cF,\varrho,\delta]$ {\it
Ferrers diagram rank-metric code} if all codewords of $\cC$ are $m
\times \eta$ matrices in which all entries not in $\cF$ are {\it
zeroes}, it forms a rank-metric code with dimension $\varrho$, and
minimum rank distance $\delta$. Let $\dim (\cF,\delta)$ be the
largest possible dimension of an $[\cF,\varrho,\delta]$ code. The
following theorem~\cite{EtSi09} provides an upper bound on the
size of such codes.

\begin{theorem}
\label{thm:upper_rank} For a given $i$, $0 \leq i \leq \delta -1$,
if $\nu_i$ is the number of dots in $\cF$, which are not contained
in the first $i$ rows and are not contained in the rightmost
$\delta-1-i$ columns, then $\min_i \{ \nu_i \}$ is an upper bound
on $\dim (\cF,\delta)$.
\end{theorem}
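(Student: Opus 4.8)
The plan is to prove the upper bound $\dim(\cF,\delta) \le \min_i \{\nu_i\}$ by showing, for \emph{each} fixed $i$ with $0 \le i \le \delta-1$, that $\dim(\cF,\delta) \le \nu_i$; taking the minimum over $i$ then yields the claim. The key idea is a puncturing/projection argument: from an arbitrary $[\cF,\varrho,\delta]$ Ferrers diagram rank-metric code $\cC$, I would delete the first $i$ rows and the rightmost $\delta-1-i$ columns of every codeword, and argue that the resulting map is an injective linear map into the space of matrices supported on the $\nu_i$ remaining dots. Injectivity would force $\varrho \le \nu_i$, and since $\cC$ was arbitrary, $\dim(\cF,\delta) \le \nu_i$.

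\textbf{Carrying out the steps.} First I would fix $i$ and define the linear map $\pi_i : \cC \to \F_q^{(m-i) \times (\eta - (\delta-1-i))}$ that erases the first $i$ rows and the last $\delta-1-i$ columns of each $m \times \eta$ codeword. Since each codeword is supported on $\cF$ and has \emph{zeroes} off $\cF$, the image of $\pi_i$ is supported on the set of dots of $\cF$ that survive the deletion, and by definition that set has exactly $\nu_i$ dots; hence the image lies in a space of dimension at most $\nu_i$. The map $\pi_i$ is clearly $\F_q$-linear because deleting fixed rows and columns is linear. The crux is to show $\pi_i$ is injective on $\cC$, equivalently that its kernel is trivial. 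A nonzero element of the kernel would be a nonzero codeword $A$ whose every nonzero entry lies either in the first $i$ rows or in the rightmost $\delta-1-i$ columns. Such a matrix has rank at most $i + (\delta-1-i) = \delta - 1$, since it is a sum of a matrix of row-rank $\le i$ and a matrix of column-rank $\le \delta-1-i$. But $A \ne 0$ and $\cC$ is linear, so $d_R(A,0) = \rank(A) \ge \delta$, a contradiction. Therefore $\ker \pi_i = \{0\}$, $\pi_i$ is injective, and $\varrho = \dim_{\F_q} \cC \le \nu_i$.

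\textbf{Conclusion.} Since the bound $\varrho \le \nu_i$ holds for every admissible $\cC$, it bounds the largest dimension: $\dim(\cF,\delta) \le \nu_i$. As $i$ ranges over $0,1,\dots,\delta-1$ the tightest of these inequalities gives $\dim(\cF,\delta) \le \min_i \{\nu_i\}$, which is the assertion of the theorem.

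\textbf{The main obstacle.} The only delicate point is the rank estimate for a kernel element, namely that a matrix whose support is confined to the first $i$ rows together with the last $\delta-1-i$ columns has rank at most $\delta-1$. I would make this precise by writing such an $A$ as $A = A' + A''$, where $A'$ keeps only the entries in the first $i$ rows (so $\rank A' \le i$) and $A''$ keeps the remaining entries, which then lie in the rightmost $\delta-1-i$ columns (so $\rank A'' \le \delta-1-i$), and invoking subadditivity of rank. One must check the bookkeeping that an entry of $A$ lying in the first $i$ rows \emph{and} in the last columns is counted only once, which the decomposition handles automatically. Everything else is routine linear algebra over $\F_q$.
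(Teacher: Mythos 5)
Your proof is correct. The paper itself does not prove Theorem~\ref{thm:upper_rank} --- it is quoted without proof from~\cite{EtSi09} --- but your argument is essentially the one given in that reference: there one notes that if $\varrho > \nu_i$ then some nonzero codeword vanishes on all $\nu_i$ dots outside the first $i$ rows and rightmost $\delta-1-i$ columns, hence is supported on those rows and columns and has rank at most $i+(\delta-1-i)=\delta-1<\delta$, a contradiction; your formulation via the kernel of the puncturing map $\pi_i$, together with the rank-subadditivity decomposition $A=A'+A''$, is the same argument.
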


It is not known  whether  the upper bound of
Theorem~\ref{thm:upper_rank} is attained for all parameters. A
code which attains this bound, will be called an MRD (Ferrers
diagram) code. This definition generalizes the previous definition
of MRD codes, and a construction of such codes is given
in~\cite{EtSi09}.

Without loss of generality we will
assume that $k \leq n-k$. This assumption
can be justified  as a consequence of the following
lemma~\cite{EV}.

\begin{lemma}
If $\C$ is an $(n,M,d ,k)_q$ constant dimension code then
$\C^\perp = \{ X^\perp : X \in\C\}$, where $X^\perp$ is the
orthogonal subspace of $X$, is an $(n,M,d ,n-k)_q$ constant
dimension code.
\end{lemma}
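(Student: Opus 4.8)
The plan is to establish three facts about the complementation map $X \mapsto X^\perp$ on the code $\C$: that it sends $k$-dimensional subspaces to $(n-k)$-dimensional ones, that it is a bijection onto $\C^\perp$ (so $|\C^\perp| = M$), and that it preserves the subspace distance exactly. Together these immediately yield that $\C^\perp$ is an $(n,M,d,n-k)_q$ constant dimension code.

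First I would invoke the standard linear-algebra fact that $\dim X^\perp = n - \dim X$, so every $X^\perp$ with $X \in \C$ is an $(n-k)$-dimensional subspace of $\F_q^n$; thus $\C^\perp \subseteq \mathcal{G}_q(n,n-k)$. Next, because orthogonal complementation is an involution, $(X^\perp)^\perp = X$, the map is injective on $\C$; hence distinct codewords of $\C$ yield distinct codewords of $\C^\perp$ and $|\C^\perp| = |\C| = M$.

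The heart of the argument is to show that $d_S(X^\perp, Y^\perp) = d_S(X,Y)$ for all $X, Y \in \C$. Here I would use the two standard identities $X^\perp \cap Y^\perp = (X+Y)^\perp$ and $\dim(X+Y) = \dim X + \dim Y - \dim(X \cap Y)$. Substituting $\dim X^\perp = n - \dim X$, $\dim Y^\perp = n - \dim Y$, and $\dim(X^\perp \cap Y^\perp) = n - \dim(X+Y)$ into the definition (\ref{def_subspace_distance}), a short computation cancels all the $n$-terms and recovers $\dim X + \dim Y - 2\dim(X \cap Y) = d_S(X,Y)$. Since the distance between every pair of codewords is unchanged, the minimum distance of $\C^\perp$ equals that of $\C$, namely $d$.

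The argument is routine linear algebra, so I do not expect a serious obstacle; the only step demanding care is the distance-preservation computation, where one must correctly apply the identity $X^\perp \cap Y^\perp = (X+Y)^\perp$ (and not the false $(X \cap Y)^\perp$). Assembling the three facts then shows that $\C^\perp$ is an $(n,M,d,n-k)_q$ constant dimension code, which completes the proof.
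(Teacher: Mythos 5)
Your proof is correct and complete. Note that the paper itself does not prove this lemma at all --- it is quoted from reference \cite{EV} --- so there is no in-paper argument to compare against; your route is the standard one and is exactly what is needed. The three ingredients (the dimension formula $\dim X^\perp = n-\dim X$, the involution $(X^\perp)^\perp = X$ giving injectivity and hence $|\C^\perp|=M$, and the identity $X^\perp \cap Y^\perp = (X+Y)^\perp$ combined with $\dim(X+Y)=\dim X+\dim Y-\dim(X\cap Y)$ to get exact preservation of $d_S$) assemble precisely as you describe, and the cancellation computation checks out: $(n-\dim X)+(n-\dim Y)-2\bigl(n-\dim(X+Y)\bigr)=\dim X+\dim Y-2\dim(X\cap Y)$. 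One subtlety worth making explicit in a final write-up: over $\F_q$ the standard bilinear form is non-degenerate but not anisotropic, so $X^\perp$ is generally \emph{not} a complement of $X$ (one can have $X\cap X^\perp \neq \{0\}$, and $\F_q^n \neq X \oplus X^\perp$); your argument never uses any such decomposition --- it relies only on non-degeneracy, which is exactly what validates the three facts you invoke --- but a reader may appreciate seeing that acknowledged, since ``orthogonal complement'' intuition from real inner-product spaces can mislead here. You also correctly flagged and avoided the one genuine trap, namely using $(X\cap Y)^\perp$ in place of $(X+Y)^\perp$.
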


For $X\in \Gr$, we define the $k\times (n-k)$ matrix $R(X)$ as the
sub-matrix of $RE(X)$ with the columns which are indexed
by\textit{ zeroes} of $v(X)$. By
Corollary~\ref{cor:distanceSameID}, for any two codewords $X,Y\in
\C_i$, $\C_i \subseteq \C$, $1\leq i\leq t$, the subspace distance
between $X$ and $Y$ can be calculated in terms of rank distance,
i.e.,
$$d_S(X,Y)=2d_R(R(X),R(Y)).$$

For each sub-code $\C_i \subseteq \C$, $1\leq i\leq t$, we define
a Ferrers diagram rank-metric code
$$ R(\C_i)\,\ \deff\ \,\{R(X):X\in \C_i\}.$$
Note, that such a code is obtained by the inverse operation to the
\textit{lifting} operation, defined in ~\cite{SKK08}. Thus,
$R(\C_i)$ will be called the \textit{unlifted code} of the
sub-code $\C_i$.

We define the subspace distance between two sub-codes $\C_i$, $\C_j$
of $\C$, $1\leq i\neq j \leq t$ as follows:
$$d_S(\C_i,\C_j) =\min\{d_S(X,Y):X\in \C_i, Y\in \C_j\}.$$
By Corollary~\ref{cor:distance},
$$d_S(\C_i,\C_j)\geq d_H(v_i,v_j).$$

The following lemma shows a case in which the last inequality becomes an
equality.
\begin{lemma}
Let $\C_i$  and $\C_j$ be two different sub-codes of $\C\subseteq
\Gr$, each one contains the subspace whose RREF is the
corresponding column permutation of the matrix $(I_k 0_{k\times
(n-k)})$,  where $I_k$ denotes the $k\times k$ identity matrix and
$0_{a\times b}$  denotes an $a\times b$ allzero matrix. Then
$$ d_S(\C_i,\C_j)=d_H(v_i,v_j).$$
\end{lemma}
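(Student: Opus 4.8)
The plan is to combine the general lower bound already in hand with an explicit pair of codewords attaining it. Applying Corollary~\ref{cor:distance} to every pair $X\in\C_i$, $Y\in\C_j$ (all of which satisfy $v(X)=v_i$ and $v(Y)=v_j$) gives $d_S(X,Y)\ge d_H(v_i,v_j)$, hence $d_S(\C_i,\C_j)\ge d_H(v_i,v_j)$. So the whole content of the lemma is the reverse inequality, which I would obtain by exhibiting one pair whose subspace distance equals $d_H(v_i,v_j)$. The natural candidates are exactly the two distinguished subspaces furnished by the hypothesis: let $X_i\in\C_i$ and $X_j\in\C_j$ be the subspaces whose RREF is the column permutation of $(I_k\,0_{k\times(n-k)})$ matching $v_i$ and $v_j$, respectively.

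The key observation is that each distinguished subspace is a coordinate subspace. Permuting the columns of $(I_k\,0_{k\times(n-k)})$ so that the $k$ identity columns land in the positions of the ones of $v_i$ (in ascending order down the rows) produces a matrix whose rows are precisely the standard unit vectors $e_p$ for $p$ a one-position of $v_i$; this matrix is already in RREF and has all of its $R(X_i)$-entries equal to zero. Thus $X_i=\langle e_p : (v_i)_p=1\rangle$ and likewise $X_j=\langle e_p : (v_j)_p=1\rangle$.

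With this identification the distance is immediate. The intersection of two coordinate subspaces is the coordinate subspace spanned by the unit vectors common to both, so $\dim(X_i\cap X_j)$ is the number of positions where $v_i$ and $v_j$ both carry a one, i.e. $\dim(X_i\cap X_j)=|\mu(X_i,X_j)|$. Substituting into the definition~(\ref{def_subspace_distance}) gives $d_S(X_i,X_j)=k+k-2|\mu(X_i,X_j)|=2k-2|\mu(X_i,X_j)|$, which by~(\ref{commom_ones}) is exactly $d_H(v_i,v_j)$. Hence $d_S(\C_i,\C_j)\le d_S(X_i,X_j)=d_H(v_i,v_j)$, and together with the lower bound this proves the lemma.

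I do not expect a serious obstacle here: the only point that needs a careful reading of the definitions is translating the phrase ``column permutation of $(I_k\,0_{k\times(n-k)})$'' into the assertion that $X_i$ is a genuine coordinate subspace (equivalently, the lift of the zero matrix, $R(X_i)=0$). Once that is in place the computation is a one-line count. As an alternative I could instead invoke Theorem~\ref{thm:distance}, where the same fact manifests as the coincidence $\widetilde{X}_\mu=\widetilde{Y}_\mu$ for this particular pair, forcing $\rank(\widetilde{X}_\mu-\widetilde{Y}_\mu)=0$; but the coordinate-subspace argument is shorter and avoids that bookkeeping.
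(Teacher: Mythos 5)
Your proof is correct, and it follows the same overall strategy as the paper: get the lower bound $d_S(\C_i,\C_j)\ge d_H(v_i,v_j)$ from Corollary~\ref{cor:distance}, then exhibit the two distinguished codewords as a pair attaining the bound. Where you differ is in how the distance of that pair is computed. The paper stays inside its matrix machinery: it writes $\rank(RE(X)*RE(Y))=\rank(RE(X)*Y_{\mu^C}*Y_{\mu})=\rank(RE(X)*Y_{\mu^C})=k+\tfrac{1}{2}d_H(v_i,v_j)$ (the rows of $Y_\mu$ being redundant because they already lie in the row space of $RE(X)$), and then feeds this into the rank formulation (\ref{distance-rank}) of the subspace distance. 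You instead observe that a column permutation of $(I_k\,0_{k\times(n-k)})$ has standard unit vectors as rows, so each distinguished subspace is a coordinate subspace $\langle e_p : p\in \mathrm{supp}(v_i)\rangle$; then the intersection is the coordinate subspace on the common support, $\dim(X_i\cap X_j)=|\mu(X_i,X_j)|$, and the definition (\ref{def_subspace_distance}) together with (\ref{commom_ones}) gives $d_S(X_i,X_j)=2k-2|\mu(X_i,X_j)|=d_H(v_i,v_j)$ directly. Your version is more elementary and self-contained (it never needs the concatenated-rank identity or the $\mu$/$\mu^C$ decomposition), while the paper's version has the advantage of reusing the bookkeeping already set up for Theorem~\ref{thm:distance}; your closing remark that one could alternatively apply Theorem~\ref{thm:distance} with $\widetilde{X}_\mu=\widetilde{Y}_\mu$ is also accurate and is essentially a third, equivalent route. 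The one step you rightly flag as needing care --- that ``column permutation of $(I_k\,0_{k\times(n-k)})$'' means the unit columns sit exactly at the ones of $v_i$, so that the matrix is in RREF and $R(X_i)=0$ --- is handled correctly.
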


\begin{proof} Let $X\in \C_i$ and $Y\in \C_j$ be subspaces whose
RREF equal to column permutation of the matrix
$(I_k 0_{k\times (n-k)})$. It is easy to verify that
\begin{align}
\rank
\left(\begin{array}{c}
RE(X)\\RE(Y)\end{array}\right)= \rank \left(\begin{array}{c}
RE(X)\\Y_{\mu^C}\\Y_\mu\end{array}\right)=\rank \left(\begin{array}{c}
RE(X)\\Y_{\mu^C}\end{array}\right).
\end{align}
Clearly,  $\rank(Y_{\mu^C})=\frac{d_H(v_i,v_j)}{2}$, and hence,
$\text{rank}(RE(X)*RE(Y))=k+\frac{d_H(v_i,v_j)}{2}$.
By (\ref{distance-rank}), $d_S(X,Y)=2\text{rank}(RE(X)*RE(Y))-2k=2k+d_H(v_i,v_j)-2k=
d_H(v_i,v_j)$, i.e., $d_S(\C_i,\C_j)\leq d_H(v_i,v_j)$.  By Corollary
\ref{cor:distance}, $ d_S(\C_i,\C_j)\geq d_H(v_i,v_j)$, and hence, $d_S(\C_i,\C_j)=d_H(v_i,v_j)$.
\end{proof}

\begin{corollary}
\label{cor:linearity} Let $v_i$ and $v_j$ be two identifying
vectors of codewords in an $(n,M,d,k)_q$ code $\C$. If $d_H(v_i,
v_j) < d$ then at least one of the corresponding sub-codes, $\C_i$
and $\C_j$,  does not contain the subspace with RREF which is a
column permutation of the matrix $(I_k 0_{k\times(n-k)})$. In
other words, the corresponding unlifted code is not linear since
it does not contain the allzero  codeword.
\end{corollary}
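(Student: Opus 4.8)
The plan is to prove this corollary as a contrapositive consequence of the preceding lemma. The statement asserts that if two identifying vectors $v_i$ and $v_j$ satisfy $d_H(v_i,v_j) < d$, then the two sub-codes $\C_i$ and $\C_j$ cannot both contain the subspace whose RREF is a column permutation of $(I_k\, 0_{k\times(n-k)})$.

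First I would argue by contradiction. Suppose, contrary to the claim, that \emph{both} $\C_i$ and $\C_j$ contain the subspace whose RREF is the appropriate column permutation of $(I_k\, 0_{k\times(n-k)})$ (the permutation for $\C_i$ being dictated by $v_i$, and for $\C_j$ by $v_j$). Then the hypotheses of the preceding lemma are exactly satisfied, so that lemma immediately gives $d_S(\C_i,\C_j)=d_H(v_i,v_j)$. By the definition of the distance between sub-codes, $d_S(\C_i,\C_j)\le d_S(X,Y)$ for the specific witnesses $X\in\C_i$, $Y\in\C_j$, and in fact the lemma tells us this minimum equals $d_H(v_i,v_j)$. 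Hence there exist distinct codewords $X\in\C_i\subseteq\C$ and $Y\in\C_j\subseteq\C$ with $d_S(X,Y)=d_H(v_i,v_j)<d$, contradicting the assumption that $\C$ is an $(n,M,d,k)_q$ code with minimum distance $d$. This contradiction forces at least one of $\C_i,\C_j$ to omit the relevant permuted-identity subspace.

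Next I would translate this conclusion into the rank-metric language of the second sentence. Recall that for a sub-code $\C_i$ the unlifted code is $R(\C_i)=\{R(X):X\in\C_i\}$, obtained by extracting the columns of $RE(X)$ indexed by the zeroes of $v(X)$. The subspace whose RREF is the column permutation of $(I_k\, 0_{k\times(n-k)})$ is precisely the one whose non-pivot columns are all zero, so its image under $R(\cdot)$ is the $k\times(n-k)$ allzero matrix. Therefore ``$\C_i$ contains the permuted-identity subspace'' is equivalent to ``$R(\C_i)$ contains the allzero codeword.'' Since a linear code must contain the zero codeword, the failure of one of $\C_i,\C_j$ to contain the permuted-identity subspace means the corresponding unlifted rank-metric code is not linear, which is exactly the restated form of the corollary.

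I do not expect a serious obstacle here, since the corollary is essentially a direct repackaging of the lemma: the only care needed is bookkeeping. The one point worth stating explicitly is that $X$ and $Y$ are genuinely distinct codewords of $\C$ (guaranteed because $v_i\neq v_j$ forces $v(X)=v_i\neq v_j=v(Y)$, so $X\neq Y$), which is what legitimately invokes the minimum-distance property of $\C$. Beyond that, the interpretive step identifying the allzero unlifted codeword with the permuted $(I_k\, 0_{k\times(n-k)})$ subspace is routine given the definition of $R(\cdot)$, so a single sentence suffices.
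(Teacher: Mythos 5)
Your proof is correct and matches the paper's intended argument: the corollary is stated without proof precisely because it is the immediate contrapositive of the preceding lemma, which is exactly how you derive it (if both sub-codes contained the permuted-identity subspace, the lemma would give $d_S(\C_i,\C_j)=d_H(v_i,v_j)<d$, contradicting the minimum distance of $\C$). Your additional bookkeeping --- checking $X\neq Y$ via $v_i\neq v_j$, and identifying the permuted-identity subspace with the allzero unlifted codeword --- is exactly the right justification for the corollary's second sentence.
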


Assume that we can add codewords to a code $\C$, $d_S(\C)=d$,
constructed by the ML construction with a maximal constant weight
code (for the identifying vectors) $\textit{C}$,
$d_H(\textit{C})=d$. Corollary~\ref{cor:linearity} implies that
any corresponding unlifted Ferrers diagram rank-metric code of any
new identifying vector will be nonlinear.

The next two lemmas reduce the search  domain for constant dimension lexicodes.

\begin{lemma}
\label{lm:first_id} Let $\C$ be an $(n,M,d=2\delta,k)_q$ constant
dimension code. Let $\C_1\subseteq\C$, $v(X)=v_1=11\ldots
100\ldots 0$  for each $X\in\C_1$, be a sub-code for which
$R(\C_1)$ attains the  upper  bound of
Theorem~\ref{thm:upper_rank}, i.e., $|\C_1| = | R(\C_1)|=
q^{(k-\delta+1)(n-k)}$. Then there is no codeword $Y$ in $\C$ such
that $d_H(v(Y),v_1)<d$.
\end{lemma}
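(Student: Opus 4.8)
The plan is to argue by contradiction. Suppose such a codeword $Y \in \C$ exists with $d_H(v(Y), v_1) < d = 2\delta$. Since $\C$ is a constant dimension code with $d_S(\C) = d$, and $v_1 = 11\ldots100\ldots0$ is the identifying vector whose echelon Ferrers form is the full $k \times (n-k)$ box, the sub-code $\C_1$ has the special property that it contains the subspace with RREF equal to $(I_k \, 0_{k\times(n-k)})$ precisely when $R(\C_1)$ is a \emph{linear} MRD code attaining the bound of Theorem~\ref{thm:upper_rank}. First I would establish that because $R(\C_1)$ attains $q^{(k-\delta+1)(n-k)}$, it is forced to be linear (an MRD Ferrers diagram code on the full box is the classical MRD code), so by the reasoning behind Corollary~\ref{cor:linearity}, $\C_1$ does contain the subspace whose RREF is $(I_k \, 0_{k\times(n-k)})$; call this subspace $X_0$.

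Next I would examine the distance $d_S(X_0, Y)$ directly. The key tool is the distance formula of Theorem~\ref{thm:distance}, namely
\begin{equation*}
d_S(X_0, Y) = d_H(v_1, v(Y)) + 2\rank(\widetilde{X}_\mu - \widetilde{Y}_\mu).
\end{equation*}
Here the columns indexed by $\rho(X_0, Y)$ are exactly the positions where both identifying vectors have a \emph{zero}. Because $X_0$ has RREF $(I_k \, 0_{k\times(n-k)})$, its free (dot) columns are all zero in $R(X_0)$, which I expect will force $\widetilde{X}_\mu$ to vanish on the relevant $\rho$-columns, so that $\rank(\widetilde{X}_\mu - \widetilde{Y}_\mu)$ collapses to a quantity controlled entirely by $Y$. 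The aim is to show that for \emph{some} codeword in $\C_1$ (not necessarily $X_0$ itself) the rank term can be kept small enough that $d_S$ with $Y$ drops below $d$, contradicting $d_S(\C) = d$.

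The core combinatorial step, and the main obstacle, is to exploit the fact that $R(\C_1)$ is a \emph{full} MRD code on the entire $k \times (n-k)$ box. Since $d_H(v(Y), v_1) < 2\delta$, we have $|\mu(X_0, Y)| > k - \delta$ by equation~(\ref{commom_ones}); I would then use the richness of the MRD code $R(\C_1)$ to select a codeword $X \in \C_1$ whose unlifted matrix $R(X)$ matches the corresponding entries of $\widetilde{Y}_\mu$ on enough of the $\rho(X_0,Y)$-columns to make $\rank(\widetilde{X}_\mu - \widetilde{Y}_\mu)$ strictly smaller than $\delta - \tfrac{1}{2} d_H(v_1, v(Y))$. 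Because an MRD code attaining the bound can realize every prescribed pattern of values in a large enough sub-box (the defining interpolation property of MRD codes), such an $X$ must exist. Plugging this back into the distance formula yields $d_S(X, Y) < 2\delta = d$, contradicting the minimum distance of $\C$. The delicate part of the argument is verifying that the number of free columns available in the box, namely $n-k$, together with $|\mu(X_0,Y)| > k - \delta$, is large enough to guarantee that the MRD code can absorb the discrepancy while keeping the rank below the required threshold; this is where the extremal size $q^{(k-\delta+1)(n-k)}$ is used essentially.
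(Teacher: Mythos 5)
Your overall strategy --- exploit the extremal size of $\C_1$ to produce a codeword of $\C_1$ that is too close to $Y$ --- is sound and can in fact be made to work, but as written the proposal has genuine gaps at both of its load-bearing steps. (i) The claim that $|R(\C_1)|=q^{(k-\delta+1)(n-k)}$ forces $R(\C_1)$ to be linear is false: rank distance is translation-invariant, so a coset $\{A+C : C\in\cC\}$ of a linear MRD code $\cC$ has the same cardinality and the same minimum rank distance but contains no zero matrix; the paper's Example~\ref{exm:8-4-4} exhibits exactly such nonlinear (coset) unlifted codes. This step happens to be unnecessary, but it signals the deeper problem. (ii) The crucial existence claim rests on ``the defining interpolation property of MRD codes,'' stated as the ability to realize \emph{every} prescribed pattern of values in a large enough sub-box. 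No such property holds for arbitrary patterns of $(k-\delta+1)(n-k)$ entries; what is true is that any $k-\delta+1$ full \emph{rows} can be prescribed (more generally, for any full-rank $(k-\delta+1)\times k$ matrix $P$ and any $Q$ there is a unique codeword $M$ of $R(\C_1)$ with $PM=Q$, because $P(M-M')=0$ forces $\rank(M-M')\le\delta-1<\delta$, so $M\mapsto PM$ is injective and, by the cardinality assumption, bijective --- note this requires no linearity). You neither state this correctly nor prove it. Moreover, your route through Theorem~\ref{thm:distance} hides a circularity you do not address: $\widetilde{X}_\mu$ is extracted from $RE(RE(X)*Y_{\mu^C})$, so its $\rho(X,Y)$-columns depend on \emph{all} rows of $R(X)$, not only those indexed by $\mu(X,Y)$ (the rows outside $\mu$ are needed to clean the entries of $Y_{\mu^C}$ in the columns of $supp(v_1)\setminus\mu(X,Y)$). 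Hence ``prescribe the $\mu$-rows of $R(X)$ so that $\widetilde{X}_\mu$ matches $\widetilde{Y}_\mu$'' is not a legitimate application of row prescription: the target values move with the codeword being chosen.

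For comparison, the paper proves the lemma by a counting argument that sidesteps all of this. Since $d_S(\C)=2\delta$, every $(k-\delta+1)$-dimensional subspace lies in at most one codeword. Let $A$ be the set of $(k-\delta+1)$-dimensional subspaces whose identifying vector is supported in $supp(v_1)$. Each codeword of $\C_1$ contains $\sbinomq{k}{k-\delta+1}$ members of $A$, pairwise distinct across codewords, so $\C_1$ covers $q^{(k-\delta+1)(n-k)}\cdot\sbinomq{k}{k-\delta+1}=|A|$ of them, i.e.\ \emph{all} of $A$. If $d_H(v(Y),v_1)<2\delta$, then $RE(Y)$ has at least $k-\delta+1$ rows with leading ones in $supp(v_1)$, and the span of any $k-\delta+1$ of these rows is a member of $A$; it therefore also lies in some $X\in\C_1$, whence $\dim(X\cap Y)\ge k-\delta+1$ and $d_S(X,Y)\le 2\delta-2<d$, a contradiction. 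If you wish to keep your constructive flavor, the correct repair of your argument is: take such a subspace $U\subseteq Y$ with $U\in A$, write $RE(U)=(P\,|\,Q)$, observe that $U\subseteq X$ for $RE(X)=(I_k\,|\,M)$ iff $PM=Q$, and invoke the bijectivity of $M\mapsto PM$ established above to find $X\in\C_1$ containing $U$. That version is valid, but note that it replaces your interpolation step by exactly the pigeonhole on extremal cardinality that the paper's counting performs.
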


\begin{proof}
Let $\C$ be a given $(n,M,d=2\delta,k)_q$ constant dimension code.
Since the minimum distance of the code is $d$, the intersection of any
two subspaces in $\C$ is at most of dimension $k-\frac{d}{2}=k-\delta$.
Therefore, a subspace of dimension $k-\delta+1$
can be contained in at most  one codeword of $\C$.

We define the following set of subspaces:
\[A=\{X\in \cG_q(n,k-\delta+1): \;supp(v(X))\subseteq supp(v_1)\},
\]
where $supp(v)$ is as the set of nonzero entries in $v$. Each
codeword of the sub-code $\C_1$ contains $\begin
{footnotesize}\sbinomq {k}{k-\delta+1}\end{footnotesize}$
subspaces of dimension $k-\delta+1$, and all subspaces of
dimension $k-\delta+1$ which are contained in codewords of $\C_1$
are in $A$. Since $|\C_1| = q^{(k-\delta+1)(n-k)}$, it follows
that $\C_1$ contains $q^{(k-\delta+1)(n-k)}\cdot \begin
{footnotesize}\sbinomq {k}{k-\delta+1}\end{footnotesize}$
subspaces of~$A$.

Now we calculate the size of $A$. First we observe that
\[ A= \{X\in \cG_q(n,k-\delta+1): v(X)=ab,\; |a|=k,\; |b|=n-k,\; w(a)=k-\delta+1,\; w(b)=0\},
\]
where $|v|$ and $w(v)$ are the length and the weight of a vector
$v$, respectively. Thus $EF(v(X))$ of each $v(X)=ab$, such that
$X\in A$, has the form
\begin{equation}EF(v(X))=\left[EF(a)
\begin{footnotesize}
\begin{array}{cccc}
\bullet & \bullet &\ldots & \bullet  \\
\bullet & \bullet & \ldots &\bullet  \\
\bullet & \bullet &\ldots  &\bullet  \\
\end{array}
\end{footnotesize}\\\right].\label{EF_of_A}
\end{equation}
The number of dots in (\ref{EF_of_A}) is $(k-\delta+1)(n-k)$, and
the size of the following set $$\{EF(a):|a|=k,\;
w(a)=k-\delta+1\}$$ is $\begin{footnotesize}\sbinomq
{k}{k-\delta+1}\end{footnotesize}$. Therefore, $|A|= \begin
{footnotesize}\sbinomq {k}{k-\delta+1}\end{footnotesize}\cdot
q^{(k-\delta+1)(n-k)}$. Hence, each subspace of $A$ is contained
in some codeword from $\C_1$. A subspace $Y\in \Gr$ with
$d_H(v(Y),v_1)=2\delta-2i$, $1\leq i\leq \delta-1$, contains some
subspaces of  $A$, and therefore, $Y\notin\C$.
\end{proof}

\begin{lemma}
\label{lm:second_id} Let $\C$ be an $(n,M,d=2\delta,k)_q$ constant
dimension code, where $\delta-1\leq k-\delta$. Let  $\C_2$ be a
sub-code of $\C$ which corresponds to the identifying vector $v_2
= abfg$, where $a=\underset{k-\delta}{\underbrace{11\ldots 1}}$,
$b=\underset{\delta}{\underbrace{00\ldots 0}}$,
$f=\underset{\delta}{\underbrace{11\ldots 1}}$, and
$g=\underset{n-k-\delta}{\underbrace{00\ldots 0}}$. Assume further
that $R(\C_2)$ attains the  upper bound of
Theorem~\ref{thm:upper_rank}, i.e.,
$|\C_2|=|R(\C_2)|=q^{(k-\delta+1)(n-k)-\delta^2}$. Then there is
no codeword $Y\in \C$  with $v(Y)=a'b'fg'$, $|a'b'|=k$,
$|g'|=n-k-\delta$, such that $d_H(v(Y),v_2)<d$.
\end{lemma}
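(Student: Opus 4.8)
The plan is to adapt the covering argument of Lemma~\ref{lm:first_id}, replacing its family of low-dimensional subspaces by one tailored to the common middle block $f$. Write $r=k-\delta+1$ and split $\{1,\dots,n\}$ into the blocks $\{1,\dots,k-\delta\}$ (the $a$-block), $\{k-\delta+1,\dots,k\}$ (the $b$-block), $\{k+1,\dots,k+\delta\}$ (the $f$-block) and $\{k+\delta+1,\dots,n\}$ (the $g$-block), so that $S:=\mathrm{supp}(v_2)$ is the $a$-block together with the $f$-block. As in Lemma~\ref{lm:first_id}, since $d_S(\C)=2\delta$ any two distinct codewords of $\C$ meet in dimension at most $k-\delta$, hence every $r$-dimensional subspace of $\F_q^n$ lies in at most one codeword. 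The object I would track is
\[
A=\bigl\{\,W\in\cG_q(n,r):\ \{k+1,\dots,k+\delta\}\subseteq\mathrm{supp}(v(W))\subseteq S\,\bigr\},
\]
the $r$-dimensional subspaces whose pivots lie in $S$ and contain the whole $f$-block. The goal is to show that $\C_2$ covers every member of $A$ and that each admissible $Y$ contains a member of $A$; the uniqueness just recalled then forces $Y\notin\C$.

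First I would check that for $X\in\C_2$ every subspace $W\subseteq X$ satisfies $\mathrm{supp}(v(W))\subseteq S$: the columns of $RE(X)$ indexed by $S$ form a permuted identity, so projection onto them is an isomorphism, and a short echelon argument then places the pivots of $W$ inside those of $X$. Next, for a fixed $X\in\C_2$ I would count the subspaces $W\subseteq X$ lying in $A$. Identifying $X$ with $\F_q^k$ through its pivot coordinates and noting that the $f$-pivots become the last $\delta$ of them, these $W$ correspond bijectively to the $(r-\delta)$-dimensional subspaces of $\F_q^{k-\delta}$; hence each $X$ contains exactly $\sbinomq{k-\delta}{r-\delta}=\sbinomq{k-\delta}{\delta-1}$ of them. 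As every such $W$ has dimension $r>k-\delta$, these subspaces are pairwise distinct across codewords, so $\C_2$ covers precisely $|\C_2|\,\sbinomq{k-\delta}{\delta-1}$ members of $A$.

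The crux, and the step I expect to be the main obstacle, is the matching identity $|A|=|\C_2|\,\sbinomq{k-\delta}{\delta-1}$, which upgrades the inclusion ``covered $\subseteq A$'' to ``$\C_2$ covers all of $A$.'' It is essential here that $A$ carries the extra constraint $\{k+1,\dots,k+\delta\}\subseteq\mathrm{supp}(v(W))$: the naive analogue ``$\mathrm{supp}(v(W))\subseteq S$'' of Lemma~\ref{lm:first_id} fails, because $\C_2$ only realizes the bound $q^{(k-\delta+1)(n-k)-\delta^2}$ of Theorem~\ref{thm:upper_rank} and is too small to saturate all subspaces supported on $S$. The reconciling point is that for $W\in A$ the $\delta$ rows with $f$-pivots have no free entries in the $b$-block, exactly mirroring the Ferrers constraint of $v_2$. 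To obtain $|A|$ I would read the echelon form of a $W\in A$ blockwise: the $a$-block contributes $\sbinomq{k-\delta}{\delta-1}$ shapes, while the $b$-entries of the $k-2\delta+1$ rows with $a$-pivots and the $g$-entries of all $r$ rows are free, giving
\[
|A|=\sbinomq{k-\delta}{\delta-1}\,q^{(k-2\delta+1)\delta+(k-\delta+1)(n-k-\delta)} .
\]
Checking that this exponent simplifies to $(k-\delta+1)(n-k)-\delta^2$, and hence matches $|\C_2|$ exactly, is the key computation.

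Finally I would produce the required subspace inside $Y$. Writing $d_H(v(Y),v_2)=2\delta-2i$, the range $0<d_H<2\delta$ gives $1\le i\le\delta-1$; comparing $v(Y)=a'b'fg'$ with $v_2$ shows that $Y$ has all $\delta$ pivots of the $f$-block together with $w(a')=k-2\delta+i\ge k-2\delta+1$ pivots in the $a$-block, where the hypothesis $\delta-1\le k-\delta$ is precisely what guarantees $k-2\delta+1\ge0$. Selecting the $\delta$ rows of $RE(Y)$ with $f$-pivots and any $k-2\delta+1$ of its rows with $a$-pivots, I note that a subset of echelon rows is again in reduced echelon form, so their span $W$ is $r$-dimensional with pivots equal to $\{k+1,\dots,k+\delta\}$ and the chosen $a$-positions; thus $W\in A$. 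Being in $A$, $W$ lies in some $X\in\C_2\subseteq\C$, and it also lies in $Y$, so if $Y\in\C$ then the distinct codewords $X\neq Y$ (distinct since $v(X)=v_2\neq v(Y)$) satisfy $\dim(X\cap Y)\ge r>k-\delta$, contradicting $d_S(\C)=2\delta$. Hence no such $Y$ belongs to $\C$.
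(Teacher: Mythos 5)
Your proof is correct and follows essentially the same route as the paper: your set $A$ is exactly the paper's set $B$ of $(k-\delta+1)$-dimensional subspaces whose pivots contain the $f$-block and lie in $\mathrm{supp}(v_2)$, you establish the same counting identity $|A|=|\C_2|\sbinomq{k-\delta}{k-2\delta+1}$ to conclude that $\C_2$ covers all of $A$, and you then derive the contradiction from the fact that any admissible $Y$ contains a member of $A$. The only difference is that you spell out details the paper leaves implicit (the quotient-correspondence count inside each codeword and the explicit row-selection producing $W\subseteq Y$), which is a matter of exposition, not of substance.
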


\begin{proof}
Similarly to the proof of Lemma~\ref{lm:first_id},
we define the following set of subspaces:
\[B=\{X\in \cG_q(n,k-\delta+1):
v(X)=a''bfg \; \textrm{with}\; |a''|=k-\delta, \; w(a'')=k-2\delta+1 \}.
\]
As in the previous proof, we can see that $\C_2$ contains
$q^{(k-\delta+1)(n-k)-\delta^2}\cdot \begin {footnotesize}\sbinomq
{k-\delta}{k-2\delta+1}\end{footnotesize}$ subspaces of $B$. In
addition, $|B|= \begin {footnotesize}\sbinomq
{k-\delta}{k-2\delta+1}\end{footnotesize}\cdot q^{(k-2
\delta+1)\delta+(k-\delta+1)(n-k-\delta)}=\begin
{footnotesize}\sbinomq {k-\delta}{k-2\delta+1}\end{footnotesize}
\cdot q^{(k-\delta+1)(n-k)-\delta^2}$. Thus each subspace in $B$
is contained in some codeword from $\C_2$. A subspace $Y\in \Gr$,
such that $v(Y)=a'b'fg'$ ($|a'b'|=k$, $|g'|=n-k-\delta$), with
$d_H(v(Y),v_2)=2\delta-2i$, $1\leq i\leq \delta-1$, contains some
subspaces of $B$, and therefore, $Y\notin \C$.
\end{proof}

\section{Search for Constant Dimension Lexicodes}
\label{sec:search}

In this section we describe our search method for constant
dimension lexicodes, and present some  resulting codes which are
the largest currently known constant dimension codes for their
parameters.

To search for large constant dimension code we use the multilevel
structure of such codes, described in the previous section. First,
we order the set of all binary words of length $n$ and weight $k$
by an appropriate order. The words in this order are the
candidates to be the identifying vectors of the final code. In
each step of the construction we have the current code $\C$ and
the set of subspaces not examined yet. For each candidate for an
identifying vector $v$ taken by the given order, we search for a
sub-code in the following way: for each subspace $X$ (according to
the lexicographic order of subspaces associated with $v$) with the
given Ferrers diagram we calculate the distance between $X$ and
$\C$, and add $X$ to $\C$ if this distance is at least $d$. By
Theorem~\ref{thm:distance} and Corollary~\ref{cor:distance} it
follows that in this process, for some subspaces it is enough only
to calculate the Hamming distance between the identifying vectors
in order to determine a lower bound on the subspace distance. In
other words, when we examine a new subspace to be inserted into
the lexicode, we first calculate the Hamming distance between its
identifying vector and the identifying vector of a codeword, and
only if the distance is smaller than $d$, we calculate the rank of
the corresponding matrix, (see (\ref{subspace_distance})).
Moreover, by the multilevel structure of a code, we need only to
examine the Hamming distance between the identifying vectors of
representatives of sub-codes, say the first codewords in  each
sub-code. This approach will speed up the process of the code
generation.

A construction of constant dimension lexicodes based on the
Ferrers tableaux form ordering of the Grassmannian was give in
~\cite{SiEt09}. Note that in this construction we order the
identifying vectors by the sizes of corresponding Ferrers
diagrams. The motivation is that usually a larger diagram
contributes more codewords than a smaller one.

\begin{example}
\label{exm:8-4-4}
Table \ref{tab:lexicode8-4-4} shows  the identifying vectors and the sizes
of corresponding sub-codes in the $(8,4605,4,4)_2$ lexicode, $\C^{lex}$ (see
\cite{SiEt09}),
and the $(8,4573,4,4)_2$  code, $\C^{ML}$, obtained by the ML
construction~\cite{EtSi09}.

\begin{table}[h]
\centering \caption{ $\C^{lex}$  vs. $\C^{ML}$ in $\mathcal G_2(8,4)$ with
$d_S=4$}\label{tab:lexicode8-4-4}
\begin{tabular}{|c|c|c|c|}
\hline \multicolumn{1}{|c|}{$i$} &
 \multicolumn{1}{|c|}{id.vector $v_i$} &
\multicolumn{1}{c|}{size of $\C^{lex}_i$} &
\multicolumn{1}{c|}{size of $\C^{ML}_i$} \tabularnewline
\hline\hline 1&11110000 & 4096  &4096 \tabularnewline\hline 2&
11001100 &256  & 256 \tabularnewline\hline 3& 10101010 & 64 & 64
\tabularnewline\hline 4& 10011010 & 16 & -- \tabularnewline\hline
5& 10100110 & 16 & -- \tabularnewline\hline 6& 00111100 & 16 & 16
\tabularnewline\hline 7& 01011010 & 16 & 16 \tabularnewline\hline
8& 01100110 & 16 & 16 \tabularnewline\hline 9& 10010110 & 16 & 16
\tabularnewline\hline 10&01101001 & 32 & 32 \tabularnewline\hline
11&10011001 & 16 & 16 \tabularnewline\hline 12&10100101 & 16 & 16
\tabularnewline\hline 13&11000011 & 16 & 16 \tabularnewline\hline
14&01010101 & 8 & 8 \tabularnewline\hline 15&00110011 & 4 & 4
\tabularnewline\hline 16&00001111 & 1 & 1 \tabularnewline\hline
\end{tabular}
\end{table}
We can see that these two codes have the same identifying vectors,
except for two vectors 10011010 and 10100110  in the lexicode
$\C^{lex}$ which form the difference in the size of these two
codes. In  addition, there are several  sub-codes of $\C^{lex}$
for which the corresponding unlifted codes  are nonlinear:
$\C^{lex}_4$, $\C^{lex}_5$, $\C^{lex}_7$, $\C^{lex}_8$,
$\C^{lex}_{11}$, and $\C^{lex}_{12}$. However, all these  unlifted
codes are  cosets of linear codes.
\end{example}
\ \\
In general, not all unlifted codes of lexicodes based on the
Ferrers tableaux form representation are linear or cosets of some
linear codes. However, if  we construct a binary constant
dimension lexicode with only one identifying vector, the unlifted
code is always linear. This phenomena can be explained as an
immediate consequence from the main theorem in~\cite{Zan97}.
However, it does not explain why some of unlifted codes in
Example~\ref{exm:8-4-4} are cosets of linear codes, and why
$\C_9^{lex}$ is linear ($d_H(v_5,v_9)<4$)?

%The other question is whether a given unlifted sub-code attains the bound
%of Theorem~\ref{thm:upper_rank}? In examples which we examined, by permutation on
%entries of Ferrers tableaux form we can always obtain a code which
%attains this bound.

Based on Theorem \ref{thm:distance}, Lemmas
\ref{lm:first_id}, and \ref{lm:second_id}, we suggest  an improved search of
a constant dimension  $(n,M,d,k)_q$ code,  which
will be called a \textit{lexicode with a seed}.

In the first step we construct a  maximal sub-code $\C_1$  which
corresponds to the identifying vector
$\underset{k}{\underbrace{11\ldots 1}}
\underset{n-k}{\underbrace{00\ldots 0}}$. This sub-code
corresponds to the largest Ferrers diagram. In this step we can
take any known $[k\times(n-k), (n-k)(k-\frac{d}{2}+1),
\frac{d}{2}]$ MRD code (e.g.~\cite{Gab85}) and consider its
codewords as the unlifted codewords (Ferrers tableaux forms) of
$\C_1$.

In the second step we construct a  sub-code $\C_2$  which
corresponds to the  identifying vector $\underset{k-\delta}
{\underbrace{11\ldots 1}}\underset{\delta}{\underbrace{00\ldots
0}} \underset{\delta}{\underbrace{11\ldots 1}}
\underset{n-k-\delta} {\underbrace{00\ldots 0}}$. According to
Lemma~\ref{lm:first_id}, we cannot use identifying vectors with
larger Ferrers diagrams (except for the identifying vector
$\underset{k}{\underbrace{11\ldots 1}}
\underset{n-k}{\underbrace{00\ldots 0}}$ already used). If there
exists an MRD (Ferrers diagram) code with the corresponding
parameters, we can take any known construction of such code (see
in \cite{EtSi09}) and build from it the corresponding sub-code. If
a code which attains the bound of Theorem~\ref{thm:upper_rank} is
not known, we take the largest known Ferrers diagram rank-metric
code with the required parameters.

In the third step we construct the other sub-codes, according to the
lexicographic order based on the Ferrers tableaux form representation.
We first calculate the Hamming distance between the identifying vectors and
examine the subspace distance only of subspaces which are not pruned out by
Lemmas \ref{lm:first_id} and \ref{lm:second_id}.

\begin{example}
\label{exm:10-5-6}

Let $n=10$, $k=5$, $d=6$, and $q=2$.
By the construction of a lexicode with a seed we obtain  a constant
dimension code of size $32890$. (A code of
size $32841$ was obtained by the ML construction~\cite{EtSi09}).
\end{example}

\begin{example}
Let  $n=7$, $k=3$, $d=4$, and $q=3$. By the construction a lexicode
with a seed we obtain  a constant
dimension code of size $6691$. (A code of
size $6685$ was obtained by the ML construction~\cite{EtSi09}).
\end{example}

We introduce now  a variant of the construction of a lexicode with
a seed. As a seed we take a constant dimension code obtained by
the ML construction \cite{EtSi09} and try to add some more
codewords using the lexicode construction. Similarly, we can take
as a seed any subset of codewords obtained by any given
construction and to continue by applying the lexicode with a seed
construction.

\begin{example}
Let $n=8$, $k=d=4$, and $q=2$. We take the $(8,4573,4,4)_2$ code
obtained by the
ML construction (see Table~\ref{tab:lexicode8-4-4})
and then continue with the lexicode
construction. The size of the resulting code is $4589$
(compared to $\C^{lex}$ of size $4605$ in Table~\ref{tab:lexicode8-4-4}),
where there are two additional sub-codes of size $8$ which
correspond to identifying vectors $10011010$  and $10100110$.
\end{example}

\begin{example}
\label{ex:4-4-9} Let $n=9$, $k=d=4$, and $q=2$. Let $\C$ be a
$(9,2^{15}+2^{11}+2^7,4,4)_2$ code obtained as follows. We take
three codes of sizes $2^{15}$, $2^{11}$, and $2^7$, corresponding
to identifying vectors $111100000$, $110011000$, and $110000110$,
respectively, and then continue by applying the lexicode with a
seed construction. For the identifying vector $111100000$ we can
take as the unlifted code, any code which attains the bound of
Theorem~\ref{thm:upper_rank}. To generate the codes for the last
two identifying vectors with the corresponding unlifted codes
(which attains the bound of Theorem~\ref{thm:upper_rank}), we
permute the order of entries in the Ferrers diagrams and apply the
lexicode construction.
 The Ferrers diagrams which correspond to the
identifying vector $110011000$  and $110000110$ are
$$
\begin{array}{ccccc}
 \bullet & \bullet & \bullet & \bullet & \bullet\\
 \bullet & \bullet & \bullet & \bullet & \bullet \\
 & & \bullet & \bullet  & \bullet  \\
 &  & \bullet & \bullet & \bullet  \\
\end{array},\;\;\;
\begin{array}{ccccc}
 \bullet & \bullet & \bullet & \bullet & \bullet\\
 \bullet & \bullet & \bullet & \bullet & \bullet \\
 & & & & \bullet  \\
 &  & & & \bullet  \\
\end{array},
$$
respectively. The coordinates' order of their entries (defined in
the Introduction) is:
$$
\begin{array}{ccccc}
15 & 13 & 9 & 5& 1\\
16 & 14 & 10 & 6& 2\\
 & & 11 & 7 & 3  \\
 &  & 12 & 8 & 4 \\
\end{array},\;\;\;
\begin{array}{ccccc}
11 & 9 & 7 & 5& 1\\
12 & 10 & 8 & 6& 2\\
 & &  &  & 3  \\
 &  &  &  & 4 \\
\end{array},
$$
respectively. The order of the coordinates that we use to form an MRD code (lexicode) is
$$\begin{array}{ccccc}
11 & 7  & 5 & 3 & 1\\
15 & 12 & 8 & 2 & 4\\
 & &     13 & 9 & 6 \\
 & &     16 &14 & 10 \\
\end{array},\;\;
\begin{array}{ccccc}
9 & 7  & 5 & 3 & 1\\
11 & 10 & 8 & 2 & 4\\
 & &      &  & 6 \\
 & &      & & 12 \\
\end{array}.
$$
As a result, we obtain a code of  size $37649$ which is the
largest known constant dimension code with these parameters.
\end{example}

\begin{remark}
One of the most interesting questions, at least from a
mathematical point of view, is the existence of a $(7,381,4,3)_2$
code $\C$~\cite{EV09}. If such code exists one can verify that it
contains $128$ codewords with the identifying vector $1110000$
which is half the size of the corresponding MRD code. It suggests
that the unlifted Ferrers diagram rank-metric code of the largest
Ferrers diagram is not necessarily  an MRD code, in the largest
constant dimension code with given parameters $n$, $k$, and $d$.
\end{remark}
% $64$ with the identifying vector $1101000$, $32$ with the identifying vector $1011000$,
% and $16$ with the identifying vector $0111000$. Thus, if such code exists,
% it does not contain
% a sub-code corresponding to the maximum Ferrers diagram which attains the upper
% bound of Theorem~\ref{thm:upper_rank}.

\section{Conclusion and Open Problems}
\label{sec:conclusion}

We have described a search method for constant dimension codes
based on their multilevel structure. Some of the codes obtained by
this search are the largest known constant dimension codes with
their parameters. We described several ideas to make this search
more efficient. In this context a new formula for computation of
the subspace distance between two subspaces of $\F_q^n$ is given.
It is reasonable to believe that the same ideas will enable to
improve the sizes of the codes with parameters not considered in
our examples. We hope that a general mathematical technique to
generate related codes with larger size can be developed based on
our discussion. Our discussion raises several more questions for
future research:

\begin{enumerate}
\item Is the upper bound of Theorem~\ref{thm:upper_rank} on the
size of Ferrers diagram rank-metric code  is attainable for all
parameters?

\item What is the best choice of identifying vectors for constant
dimension lexicode in general, and for the the ML construction in
particular?

\item Can every MRD Ferrers diagram code  be generated as a
lexicode by using  a proper permutation on the coordinates (see
Example~\ref{ex:4-4-9})?

\item Is there an optimal combination of linear Ferrers diagram
rank-metric codes and cosets of linear Ferrers diagram rank-metric
codes to form a large constant dimension code?

\item For which $n$ and $k$ there exists an order of all
identifying vectors such that all the unlifted codes (of the
lexicode) will be either linear or cosets of linear codes (see
Example~\ref{exm:8-4-4}).

\end{enumerate}
% You may incorporate your references as follows in your main tex file.
% Using BibTex is not recommended but can be handled.

\medskip
% The data information below will be filled by AIMS editorial staff
%Received xxxx 20xx; revised xxxx 20xx.
\medskip

\end{document}